\DeclareMathOperator{\Fact}{Fact}
\DeclareMathOperator{\odeg}{odeg}
\renewcommand{\epsilon}{\varepsilon}
\renewcommand{\epsilon}{\varepsilon}
\newtheorem{theorem}{Theorem}
\newtheorem{lemma}[theorem]{Lemma}
\newtheorem{corollary}[theorem]{Corollary}
\theoremstyle{definition}
\newtheorem{example}[theorem]{Example}
\begin{document}

\sloppy

\begin{frontmatter}

\title{On the Lie complexity of Sturmian words}
 
\author{Alessandro De Luca}
 \ead{alessandro.deluca@unina.it}
 \address{DIETI, Universit\`a degli Studi di Napoli Federico II, Italy}
 
\author{Gabriele Fici\corref{cor1}\fnref{grant}}
\ead{gabriele.fici@unipa.it}
\address{Dipartimento di Matematica e Informatica, Universit\`a di Palermo, Italy}

\cortext[cor1]{Corresponding author.}
\fntext[grant]{Supported by MIUR PRIN 2017 Project 2017K7XPAN.}

\journal{Theoretical Computer Science}

\begin{abstract}
 Bell and Shallit recently introduced the Lie complexity of an infinite word $s$ as the function counting for each length the number of conjugacy classes of words whose elements are all factors of $s$. They proved, using algebraic techniques, that the Lie complexity is bounded above by the first difference of the factor complexity plus one; hence, it is uniformly bounded for words with linear factor complexity, and, in particular, it is at most $2$ for Sturmian words, which are precisely the words with factor complexity $n+1$ for every $n$. In this note, we provide an elementary combinatorial proof of the result of Bell and Shallit and give an exact formula for the Lie complexity of any Sturmian word.
\end{abstract}

\begin{keyword}
Sturmian word\sep Lie complexity.
\MSC[2010]{68R15}
\end{keyword}

\end{frontmatter}

\section{Introduction}

The factor complexity $p_w$ of an infinite word $w$ is the integer function that counts, for every nonnegative integer $n$, the number of distinct factors of length $n$ occurring in $w$. This notion is widely used in the combinatorial investigation of infinite sequences. For example, it is used in the definition of topological entropy of a symbolic dynamical system. 

A fundamental result of Morse and Hedlund~\cite{MoHe38} is that any aperiodic right-infinite word has factor complexity at least $n+1$ for every $n$. Sturmian words are aperiodic words with minimal factor complexity,  i.e., they have factor complexity equal to $n+1$ for every $n$ (in particular they have two factors of length $1$, i.e., they are binary words). 

In the literature, other complexity functions have been introduced. To cite a few, abelian complexity~\cite{CRSZ2010}, 
$k$-abelian complexity~\cite{DBLP:journals/jct/KarhumakiSZ13}, 
arithmetic complexity~\cite{DBLP:conf/dlt/AvgustinovichFF00,DBLP:journals/tcs/CassaigneF07},
maximal pattern complexity~\cite{kamae_zamboni_2002}, 
cyclic complexity~\cite{CaFiScZa15}, 
binomial complexity~\cite{DBLP:journals/tcs/RigoS15}, 
window complexity~\cite{window}, periodicity complexity~\cite{DBLP:journals/ijac/MignosiR13}, etc. 

Recently, Bell and Shallit~\cite{BS22} introduced the notion of \emph{Lie complexity} of an infinite word $w$ as the integer function whose value at $n$ is the number of conjugacy classes (under cyclic shift) of factors of length $n$ of $w$ with the property that every element in the conjugacy class occurs as a factor in $w$. We call such a conjugacy class a \emph{Lie class} of factors of $w$.

Bell and Shallit proved  the following result:

\begin{theorem}[\cite{BS22}]\label{thm:BS} 
Let $\Sigma$ be a finite alphabet, let $w$ be a right-infinite word over  $\Sigma$, and let $L_w:\mathbb{N}\mapsto \mathbb{N}$ be the Lie complexity function of $w$. Then for each $n\geq 1$ we have
\[L_w(n)\leq p_w(n)-p_w(n-1)+1.\]
\end{theorem}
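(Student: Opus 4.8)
The plan is to recast the inequality in terms of the Rauzy graph of order $n-1$ and the dimension of its cycle space. I would let $G$ be the directed graph whose vertices are the factors of $w$ of length $n-1$ and whose edges are the factors of length $n$, an edge $e=e_1\cdots e_n$ being drawn from its length-$(n-1)$ prefix $e_1\cdots e_{n-1}$ to its length-$(n-1)$ suffix $e_2\cdots e_n$. Thus $G$ has exactly $p_w(n-1)$ vertices and $p_w(n)$ edges. The first observation is that $G$ is weakly connected: the infinite walk traced by $w$ itself visits every vertex and moves between consecutive vertices along adjacent edges, so all vertices lie in a single component. Consequently the cycle space of $G$ — the space of rational circulations, i.e.\ edge-weightings obeying conservation at every vertex — has dimension exactly $p_w(n)-p_w(n-1)+1$.

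Next I would attach to every Lie class of length $n$ a nonzero element of this cycle space. Given such a class, I list the conjugates $u^{(0)},u^{(1)},\dots$ of one representative $u=u_1\cdots u_n$; since rotating by one letter turns the length-$(n-1)$ suffix of $u^{(i)}$ into the prefix of $u^{(i+1)}$, and since by hypothesis every conjugate is a factor of $w$ (hence an edge of $G$), these conjugates trace a closed walk in $G$. This closed walk yields a circulation supported precisely on the set of distinct length-$n$ factors lying in the given conjugacy class.

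The decisive step — the one I expect to carry the whole argument — is the linear independence of the circulations coming from distinct Lie classes. Here the crucial, and pleasantly simple, observation is that a length-$n$ factor $e$ belongs to exactly one conjugacy class of length-$n$ words, namely its own; hence $e$ can appear as an edge of the closed walk attached to a Lie class only if that class is the conjugacy class of $e$. In other words, the circulations attached to distinct Lie classes have pairwise disjoint supports, and circulations with disjoint supports are automatically linearly independent. The number of Lie classes of length $n$ is therefore bounded by the dimension of the cycle space, that is, by $p_w(n)-p_w(n-1)+1$, as required.

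Two points need a little care in the full write-up. First, the connectivity claim and the dimension formula must hold for an arbitrary right-infinite, not necessarily recurrent, word; this is why I rely only on weak connectivity, read off directly from the trajectory of $w$, rather than on strong connectivity. Second, a Lie class represented by a non-primitive word $z^k$ produces a closed walk that runs $k$ times around a shorter cycle; this still gives a nonzero circulation supported on the \emph{distinct} conjugates of $z^k$, and those conjugates remain private to that class, so edge-disjointness — and hence independence — is unaffected.
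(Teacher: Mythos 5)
Your argument is correct, but it reaches the bound by a genuinely different route from the paper. Both proofs start from the same picture: the Rauzy graph $\Gamma_w(n)$ with $p_w(n-1)$ vertices and $p_w(n)$ edges, and the observation that each Lie class of length $n$ traces a closed walk whose edge set is exactly that conjugacy class, so that distinct Lie classes are edge-disjoint (this is the content of the paper's Lemma~\ref{lem:lc} together with the first remark in its proof of the theorem). From there the paper stays purely combinatorial: it exhibits a set $\mathcal L$ of $L_w(n)-1$ edges, one for each Lie cycle except the last one visited by the walk traced by $w$, chosen so that every vertex keeps an outgoing edge outside $\mathcal L$; since every vertex of $\Gamma_w(n)$ has out-degree at least one, this yields $p_w(n)-(L_w(n)-1)\ge p_w(n-1)$ directly. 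You instead pass to the circulation (cycle) space of the weakly connected graph $\Gamma_w(n)$, whose dimension is $p_w(n)-p_w(n-1)+1$, and observe that the traversal-count circulations attached to the Lie classes are nonzero with pairwise disjoint supports, hence linearly independent. Your route is shorter and sidesteps the mildly delicate case analysis the paper needs (the three cases about out-degrees ensuring each chosen vertex retains a free outgoing edge); the price is the appeal to the dimension formula for the cycle space of a weakly connected directed graph, which makes the argument linear-algebraic rather than purely combinatorial --- and an elementary combinatorial proof, as an alternative to Bell and Shallit's algebraic one, is precisely what the paper set out to give. The two points you flag at the end (weak connectivity suffices and is read off from the trajectory of $w$; a non-primitive representative $z^k$ still yields a nonzero circulation, with constant weight $k$ on the short cycle, whose support is private to its class) are exactly the right ones to worry about, and both check out.
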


Hence, the Lie complexity is uniformly bounded for words  with linear factor complexity, and, in particular, it is bounded by $2$ for Sturmian words.

The proof of the previous theorem given in~\cite{BS22} is purely algebraic. In this note, we provide an elementary combinatorial proof of this result.

We then  give an exact formula for the Lie complexity of any Sturmian word of slope $\alpha$ in terms of the continued fraction expansion of $\alpha$.
For a general introduction to Sturmian words the reader is pointed to~\cite{LothaireAlg}.

\section{A combinatorial proof for the bound on the Lie complexity}

For all $n\geq 0$, let $\Fact_w(n)$ denote the set of factors of length $n$ of $w$, so that $p_w(n)= \#\Fact_w(n)$.
Recall that the \emph{Rauzy graph} of order $n\geq 1$ for $w$, denoted by $\Gamma_w(n)$, is the directed graph with set of vertices $\Fact_w( n-1)$ and set of edges $\Fact_{w}(n)$ such that an edge $e \in \Fact_{w}(n)$  starts at  vertex $v$ and ends at a vertex $v'$ if and only if $v$ is a prefix of $e$ and $v'$ is a suffix of $e$.

Recall that in a directed graph, a (simple) \emph{cycle} is a walk that starts and ends in the same vertex and no other vertex is repeated. For our purposes, we identify cycles having the same sets of vertices (and edges).

\begin{lemma}\label{lem:lc}
Lie classes of factors of length $n$ correspond exactly to cycles  whose lengths divide $n$ in $\Gamma_w(n)$.
\end{lemma}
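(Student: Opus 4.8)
The plan is to exhibit an explicit bijection between Lie classes of factors of length $n$ and cycles in $\Gamma_w(n)$ whose length divides $n$, by translating the conjugacy (cyclic shift) relation on words into the combinatorial structure of walks in the Rauzy graph.

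First I would recall the precise correspondence between factors of length $n$ and edges of $\Gamma_w(n)$. A factor $u = a_1 a_2 \cdots a_n$ of length $n$ is an edge that goes from its length-$(n-1)$ prefix to its length-$(n-1)$ suffix. Applying a single cyclic shift to $u$ produces $a_2 \cdots a_n a_1$; the key observation is that two factors $u$ and $u'$ are consecutive cyclic shifts of one another (i.e., $u' = a_2 \cdots a_n a_1$) precisely when the suffix of $u$ of length $n-1$ is the prefix of $u'$ of length $n-1$, meaning the edges $u$ and $u'$ share the vertex between them and can be followed one after the other in a walk. Thus a full conjugacy class, obtained by repeatedly applying the cyclic shift $n$ times and returning to the start, corresponds to a closed walk of length $n$ in $\Gamma_w(n)$ in which all edges traversed are genuinely factors of $w$ — which is exactly the defining condition of a Lie class.

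Next I would argue that the Lie condition — every conjugate of $u$ occurs in $w$ — is equivalent to requiring that the entire closed walk described above lies in $\Gamma_w(n)$, i.e., all $n$ of its edges are present. The subtle point is relating this closed walk to a \emph{simple} cycle whose length divides $n$. If $u$ is not primitive, say $u = z^k$ with $z$ primitive of length $d = n/k$, then the cyclic shifts of $u$ are exactly the cyclic shifts of $z$ (each repeated), so the conjugacy class has only $d$ distinct words, and the closed walk wraps $k$ times around a simple cycle of length $d$; here $d \mid n$. If $u$ is primitive, the closed walk of length $n$ visits $n$ distinct vertices and is itself a simple cycle of length $n$, and $n \mid n$ trivially. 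Conversely, a simple cycle of length $d$ with $d \mid n$ gives rise, by traversing it $n/d$ times, to a factor of length $n$ (well defined up to the choice of starting vertex) all of whose conjugates are the edge-labels along the cycle, hence all factors of $w$; this is a Lie class.

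I expect the main obstacle to be bookkeeping around the identification of cycles ``having the same sets of vertices and edges'' with conjugacy classes, since a single cyclic word corresponds to one undirected-in-spirit cycle while its $n$ distinct rotations correspond to the $n$ choices of starting vertex along that cycle. I would handle this carefully by noting that the paper has already declared that cycles with the same vertex set are identified, which matches exactly the fact that a conjugacy class is a cyclic word with no distinguished starting position. The remaining care is to confirm that the divisibility condition $d \mid n$ is both necessary (a length-$d$ cycle only yields a length-$n$ Lie class when $d \mid n$, since otherwise traversing it cannot close up into a word of length exactly $n$) and sufficient, which follows directly from the primitivity analysis above. Once these identifications are pinned down, the correspondence is a bijection and the lemma follows.
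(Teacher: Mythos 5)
Your proposal is correct and follows essentially the same route as the paper: conjugacy classes are traced out as closed walks of consecutive edges, primitivity determines whether the underlying simple cycle has length $n$ or a proper divisor $d$ of $n$, and conversely a cycle of length $d\mid n$ is unwound into the $d$ cyclic shifts of any of its edges. The only difference is presentational: the paper makes your converse step explicit by writing out the edges $u_1,\dots,u_d$ and deducing from the coincidence of the starting and ending vertex that the spelled-out word has period $d$, which is exactly the ``bookkeeping'' you flag.
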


\begin{proof}
Suppose that all cyclic shifts of $u=a_1\cdots a_n$ are factors of $w$. Then such shifts correspond to consecutive edges of a cycle in the Rauzy graph; if they are all distinct, i.e., $u$ is primitive, then clearly the cycle has length $n$. Otherwise we can write $u=v^{n/d}$ for some $v$, with $d$ the smallest index such that $u=a_{d+1}\cdots a_na_1\cdots a_d$.

Conversely, let $u_1,\ldots,u_d$ be consecutive edges of a cycle, with $d\vert n$, and set
\[\begin{split}u_1 &=a_1a_2\cdots a_{n-1}x_1,\\
u_2 &=a_2\cdots a_{n-1}x_1x_2,\\
&\ \, \vdots\\
u_d &=a_d\cdots a_{n-1}x_1\cdots x_d
\end{split}\]
for letters $a_1,\ldots, a_{n-1}$ and $x_1,\ldots, x_d$. Since the last edge $u_d$ returns to the starting vertex $a_1\cdots a_{n-1}$, the word
$a_1\cdots a_{n-1}x_1\cdots x_d$ has $a_1\cdots a_{n-1}$ as a suffix as well as a prefix. This implies that all its factors $u_1,\ldots, u_d$ have $d$ as a period, so that they are all the cyclic shifts of $u_1$.
\end{proof}

In view of the previous lemma, we say that a cycle in the Rauzy graph $\Gamma_w(n)$ is a \emph{Lie cycle} if its length divides $n$.
Thus, $L_w(n)$ is the number of Lie cycles in $\Gamma_w(n)$, whereas $p_w(n)$ and $p_w(n-1)$ are the numbers of edges and vertices, respectively.

For a vertex $v$, we let $\odeg(v)$ denote the \emph{out-degree} of $v$, i.e., the number of distinct edges leaving $v$.

\begin{proof}[Proof of Theorem~\ref{thm:BS}]
We first observe that two Lie cycles may share one or more vertices but cannot share edges, since conjugacy classes are disjoint. As a consequence, if a vertex belongs to $k$ different Lie cycles, its out-degree is at least $k$.

We show that in $\Gamma_w(n)$ there exists a set $\mathcal L$ of $L_w(n)-1$ edges such that every vertex of $\Gamma_w(n)$ has an outgoing edge not belonging to $\mathcal L$; this proves that the number of edges minus the number of vertices is at least $L_w(n)-1$, whence the claimed inequality $L_w(n)\leq p_w(n)-p_w(n-1)+1$.

Consider a walk on $\Gamma_w(n)$ visiting at least one edge for each Lie cycle ($w$ itself provides an example of such a walk). With the possible exception of the last one visited, every Lie cycle must contain a vertex with out-degree at least $2$. Suppose $v$ is such a vertex, and let $k\geq 1$ be the number of Lie cycles containing $v$, so that $\odeg(v)\geq k$. Then, since the walk visits all Lie cycles in $\Gamma_w(n)$, at least one of the following cases occurs:
\begin{enumerate}
    \item one of the $k$ cycles is the last one visited by the walk;
    \item $\odeg(v)\geq k+1$;
    \item at least one of the $k$ cycles contains a vertex $v'\neq v$ with $\odeg(v')\geq 2$.
\end{enumerate}
Therefore, we can define $\mathcal L$ as follows: for each of the first $L_w(n)-1$ Lie cycles, we choose an edge belonging to the same Lie cycle and leaving from a vertex with out-degree at least $2$, with the requirement that each of these vertices has at least one outgoing edge which is not chosen. This choice for $\mathcal L$ ensures that each vertex in $\Gamma_w(n)$ has at least one outgoing edge not belonging to $\mathcal L$, as required.
\end{proof}

\section{A formula for the Lie complexity of Sturmian words} 

A Sturmian word $s=s_{\alpha,\rho}$ over $\Sigma=\{0,1\}$ can be defined by taking an irrational number $0<\alpha<1$ (called \emph{slope}) and a real number $\rho$ (called \emph{intercept}) and defining for each $n\geq 0$
\[s_{\alpha,\rho}(n)=\lfloor \alpha(n+1) + \rho \rfloor - \lfloor \alpha n + \rho \rfloor\]

As is well known, any two Sturmian words $s=s_{\alpha,\rho}$ and $s'=s'_{\alpha,\rho'}$ with the same slope have the same factors. Therefore, one often considers the \emph{characteristic}  Sturmian word of slope $\alpha$, which is the word $s_{\alpha,\alpha}$. 

Let $[0;d_1+1,d_2,\ldots,d_n,\ldots]$ be the continued fraction expansion of $\alpha$. We will assume that $11$ is not a factor of $s_{\alpha,\alpha}$, which corresponds to assuming $d_1>0$, i.e., $\alpha<1/2$. The other case, i.e., when $11$ is a factor of $s_{\alpha,\alpha}$ for $\alpha=[0;1,d_2,d_3,\ldots]$, can be reduced to the previous one by considering the characteristic Sturmian word obtained by exchanging the two letters, which has slope $\alpha'=[0;d_2+1,d_3,\ldots]$. 

The characteristic Sturmian word $s=s_{\alpha,\alpha}$ is the limit of the sequence of finite words $s_{-1}=1$, $s_0=0$ and $s_{n}=s_{n-1}^{d_{n}}s_{n-2}$ 
for $n>0$. The words $s_k$, $k\geq 0$, are called \emph{standard prefixes} of $s$. 

For each $k\geq 0$, the length of $s_k$ is equal to $q_k$, the denominator of the $k$-th \emph{convergent} $p_k/q_k=[0;d_1+1,d_2,\ldots,d_k]$ (we assume $q_0=1$). We will also need, when $d_k>1$, the denominators $q_{k,\ell}$ of the $k$-th \emph{semiconvergents}  $p_{k,\ell}/q_{k,\ell}=[0;d_1+1,d_2,\ldots,d_{k-1},\ell]$, $1\leq \ell < d_k$. The words $s_{k,\ell}=s_{k-1}^{\ell}s_{k-2}$ of length $q_{k,\ell}$ are sometimes called \emph{semistandard} prefixes of $s$.

Let $S$ denote the set of standard or semistandard prefixes of $s$. For every word $v\in S$ of length at least $2$, one has $v=uab$,
where $ab\in\{01,10\}$ and the word $u$, called a \emph{central prefix}, is a \emph{bispecial factor} of $s$. Recall that a factor $u$ of $s$ is left (resp.~right) \emph{special} if both $0u,1u$ (resp.~both $u0,u1$) are factors of $s$ and bispecial if it is both left special and right special. Notice that since a Sturmian word has $n+1$ factors of length $n$, it must have exactly one left (resp.~right) special factor of each length $n$, and this must therefore be a prefix (resp.~suffix) of a bispecial factor.

The following result follows from~\cite[Lemma 9]{CaFiScZa15}.

\begin{lemma}\label{cyclic}
Let $s$ be a Sturmian word and $w$ a primitive factor of $s$ of length at least $2$. Then all conjugates of $w$ are factors of $s$ if and only if $w$ is a conjugate of an element of $S$.
\end{lemma}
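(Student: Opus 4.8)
The plan is to prove the two implications separately. The backward implication ($\Leftarrow$) I would treat as essentially classical: if $v\in S$, then writing $v=uab$ with $u$ a central word and $ab\in\{01,10\}$, the word $v$ is conjugate to the Christoffel word $0u1$ (and to its reversal $1u0=\wt{0u1}$, using that $u$ is a palindrome). Its slope is the convergent $p_k/q_k$ when $v$ is standard and the semiconvergent $p_{k,\ell}/q_{k,\ell}$ when $v$ is semistandard. The $|v|$ conjugates of a Christoffel word are exactly the factors of length $|v|$ of the purely periodic word of that rational slope, and since that slope is a (semi)convergent of $\alpha$, each of them occurs in $s$; hence all conjugates of $w$ are factors of $s$.

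For the forward implication ($\Rightarrow$), let $w$ be primitive of length $n\ge 2$ with all $n$ conjugates occurring in $s$. By (the proof of) Lemma~\ref{lem:lc}, these conjugates are the $n$ consecutive edges of a cycle in $\Gamma_s(n)$, and since $\Gamma_s(n)$ has exactly $p_s(n-1)=n$ vertices, this cycle is Hamiltonian. I would then exploit the rigid shape of a Sturmian Rauzy graph: there is a unique right-special vertex of out-degree $2$ and a unique left-special vertex of in-degree $2$, every other vertex having in- and out-degree $1$, so that $\Gamma_s(n)$ carries exactly two simple cycles and a Hamiltonian one exists precisely when one of the two branches joining the two special vertices is a single edge. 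Reading off the label of such a cycle exhibits $w$ as a conjugate of a (semi)standard prefix. Equivalently, and perhaps more cleanly, the hypothesis forces the periodic word $w^{\omega}$ to be balanced (each of its factors, being a factor of $s$, is balanced, and balance at lengths $\le n$ suffices for a word of period $n$), so that $w$ is a primitive cyclically balanced word; such words are exactly the conjugates of Christoffel words, and requiring every conjugate to occur in $s$ forces the slope to be a (semi)convergent of $\alpha$.

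The main obstacle is the final identification in the forward direction: passing from ``$w$ is a conjugate of a Christoffel word occurring in $s$'' to ``$w$ is a conjugate of an explicit element of $S$'', i.e.\ matching the admissible lengths with the denominators $q_k$ and $q_{k,\ell}$ and the admissible words with the $uab$-form. This bookkeeping---the precise correspondence between the conjugacy classes contained in $s$, the continued fraction data, and the standard/semistandard prefixes---is exactly what \cite[Lemma 9]{CaFiScZa15} supplies, so in the write-up the proof reduces to invoking that lemma and reconciling the notation.
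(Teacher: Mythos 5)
The paper offers no argument for this lemma beyond the sentence preceding it---it is stated as following from \cite[Lemma 9]{CaFiScZa15}---and your proposal correctly recognizes that the substantive content (matching the conjugacy classes fully contained in $s$ with the standard and semistandard prefixes and the continued-fraction data) is exactly what that lemma supplies, so in the end you and the authors invoke the same reference. Your two sketched self-contained routes for the forward direction are both sound in outline: the Rauzy-graph route correctly reduces the hypothesis to the existence of a Hamiltonian cycle in a graph with $n$ vertices and $n+1$ edges, one out-degree-$2$ vertex and one in-degree-$2$ vertex; and in the balance route the reduction to lengths less than $n$ is legitimate, since a factor of $w^{\omega}$ of length $qn+r$ contains exactly $q|w|_1$ occurrences of $1$ plus those of a length-$r$ prefix of a conjugate of $w$, so any imbalance would already appear at length $r<n$ among factors of $s$. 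Either route, if written out, would yield a genuinely self-contained proof, which is more than the paper provides; as submitted, though, your proof is equivalent to the paper's citation.
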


The best known example of a Sturmian word is the Fibonacci word $f=0100101001001\cdots$, which can be defined as the fixed point of the morphism sending $0$ to $01$ and $1$ to $0$. The Fibonacci word is intimately related to the well-known sequence of Fibonacci numbers:  $F_1 = 1$, $F_2 = 1$, and $F_n =F_{n-1}+F_{n-2}$ for $n\geq 2$. The precise relation is the following: $f$ is the characteristic Sturmian word $s_{1/\phi^2,1/\phi^2}$, where $\phi=(1+\sqrt 5)/2$ is the golden ratio. Since $1/\phi^2=[0;2,\overline{1}]$, we have that for the Fibonacci word $d_n=1$ for every $n$ and the sequence $q_n=F_{n+2}$ is the sequence of denominators of the convergents of $1/\phi^2$. The standard prefixes of $f$ (of length $F_n$) are the Fibonacci finite words $1$, $0$, $01$, $010$, $01001$, etc.

In Example 7.4 of~\cite{BS22}, the authors looked at the Lie complexity $L_f$ of the Fibonacci word $f$ and showed that
\[
L_f(n)=
\begin{cases}
1 \text{, if $n =0$ or $n=F_k$ for $k \geq 4$ or $n=F_k+F_{k-3}$ for $k \geq 4$;}\\
2 \text{, if $n =1,2$;}\\
0  \text{, otherwise.}
\end{cases}
\]
Notice that $F_k+F_{k-3}=F_{k-1}+F_{k-2}+F_{k-3}=2F_{k-1}$.

 The main result of this section is the following:

\begin{theorem}\label{thm:formula}
The Lie complexity of any Sturmian word $s$ of slope $\alpha<1/2$ is: 
\[
L_s(n)=
\begin{cases}
1 \text{, if $n =0$ or $n=q_{k,\ell}$ for $k \geq 2$ or $n=mq_{k}$ for $1\leq m\leq d_{k+1}+1$ and $k \geq 1$;}\\
2 \text{, if $n =1,2,\ldots,q_1$;}\\
0  \text{, otherwise.}
\end{cases}
\]
\end{theorem}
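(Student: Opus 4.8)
**The plan is to count, for each length $n$, the Lie cycles in the Rauzy graph $\Gamma_s(n)$ via Lemma~\ref{lem:lc} and Lemma~\ref{cyclic}.** By Lemma~\ref{lem:lc}, $L_s(n)$ equals the number of cycles in $\Gamma_s(n)$ whose length divides $n$. A cycle of length $d$ dividing $n$ corresponds to a primitive factor $w$ of length $d$ all of whose conjugates occur in $s$, with $w^{n/d}$ being a factor of length $n$. By Lemma~\ref{cyclic}, for $d \geq 2$ such $w$ must be a conjugate of an element of $S$ (the standard/semistandard prefixes); for $d = 1$ the single letters $0$ and $1$ both always occur. So the count of Lie classes of length $n$ reduces to: (i) how many primitive words $w$ (up to conjugacy) arising from $S$ have the property that $|w| = d$ divides $n$ AND $w^{n/d}$ is actually a factor of $s$ of length $n$; plus (ii) a contribution from the letter $0$ (i.e. $0^n$) when that is a factor, which happens precisely for small $n$ since $11$ is not a factor but $0^{d_1+1}$ is (recalling $\alpha < 1/2$ means $d_1 \geq 1$ and the longest run of $0$'s has length $d_1+1 = q_1$).

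**Next I would pin down which powers $w^{n/d}$ actually occur.** The key structural input is that the elements of $S$ are exactly the conjugates giving full Lie classes, so I would organize the lengths $d$ as the values $q_k$ (standard prefix lengths) and $q_{k,\ell}$ (semistandard prefix lengths). For each such $d$ I must determine the largest power of the associated conjugacy class that remains a factor of $s$; equivalently, how many times the Lie cycle of length $d$ can be traversed while staying inside factors of $s$. I expect that semistandard prefixes $s_{k,\ell}$ contribute only the first power (so $n = q_{k,\ell}$ gives $L_s = 1$ from that class), while standard prefixes $s_k$ of length $q_k$ admit powers up to $d_{k+1}+1$, explaining the range $n = m q_k$ for $1 \leq m \leq d_{k+1}+1$. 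This matches the recursion $s_{k+1} = s_k^{d_{k+1}} s_{k-1}$: the standard word $s_k$ repeats $d_{k+1}$ times inside $s_{k+1}$, and one checks that $s_k^{d_{k+1}+1}$ still occurs but $s_k^{d_{k+1}+2}$ does not. I would verify this by examining the return words / the structure of occurrences of the bispecial central prefix $u$ of $s_k$, using the standard combinatorics of Sturmian factors.

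**The small-$n$ regime needs separate treatment.** For $n = 1$ both single-letter classes $\{0\}$ and $\{1\}$ occur, giving $L_s(1) = 2$; for $2 \leq n \leq q_1 = d_1+1$, the factor $0^n$ occurs (its cyclic class being the single Lie class of a power of the letter $0$) together with exactly one class coming from the smallest standard/semistandard prefix, yielding $L_s(n) = 2$. I would check the boundary carefully: $q_1 = d_1+1$ is both the top of the "value $2$" range and the first value $m q_1$ (with $m=1$) in the "value $1$" formula, and the overlap $0^{q_1}$ ceasing to be a factor at $n = q_1 + 1$ must be reconciled with the power-of-$s_1$ count so that the two case-branches agree at the seam. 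Finally, I would confirm that no other length $n$ admits any Lie class, giving $L_s(n) = 0$ otherwise, and specialize to the Fibonacci case (all $d_k = 1$, $q_k = F_{k+2}$) to recover the stated formula, noting $q_{k,\ell}$ is vacuous there and $m q_k$ for $m \in \{1,2\}$ reproduces the $F_k$ and $2F_{k-1} = F_k + F_{k-3}$ values.

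**The main obstacle** will be the exact power-counting step: proving rigorously that the maximal power of the standard word $s_k$ occurring in $s$ is precisely $d_{k+1}+1$ (not $d_{k+1}$ or $d_{k+1}+2$), and that semistandard words contribute only a single power. This requires a clean description of the occurrences of $s_k$ in $s$, for which I would lean on the self-similar recursive structure $s_{k+1}=s_k^{d_{k+1}}s_{k-2}$-type decompositions and the fact that the central prefix $u$ is the unique bispecial factor of its length; the delicate point is handling the "extra" $+1$ in the exponent, which comes from the overlap at the junction between consecutive blocks rather than from a clean block repetition.
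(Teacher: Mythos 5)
Your overall strategy---reduce via Lemma~\ref{lem:lc} and Lemma~\ref{cyclic} to powers of conjugates of standard/semistandard prefixes, handle $n\leq q_1$ by hand, and then count which powers survive---is the same as the paper's, but your plan has two genuine gaps. First, the power-counting step you single out as the main obstacle is misstated in a way that matters. What is needed is not ``$s_k^{d_{k+1}+1}$ occurs but $s_k^{d_{k+1}+2}$ does not'': a Lie class of length $mq_k$ requires that \emph{every} conjugate of $s_k$ have index at least $m$, i.e.\ the relevant quantity is the \emph{minimum} index over the whole conjugacy class. Your stated claim is in fact false: by the theorem of Damanik and Lenz (Theorem~\ref{thm:dl}), which the paper invokes precisely here, the set of indexes of the conjugates of $s_k$ is $\{d_{k+1}+1,\,d_{k+1}+2\}$, so \emph{some} conjugate (possibly $s_k$ itself) does occur to the power $d_{k+1}+2$; the paper's own example of $(010)^3$ in the Fibonacci word is exactly this phenomenon. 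The correct assertion is that not \emph{all} conjugates of $s_k^{d_{k+1}+2}$ occur (and, for semistandard prefixes, that not all conjugates of $s_{k,\ell}^2$ occur, even though some conjugate has index $2$). If you insist on re-deriving this rather than citing Damanik--Lenz, you must prove the two-sided statement about the full conjugacy class, which is substantially more delicate than locating occurrences of $s_k$ itself.

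Second, your plan establishes at most which lengths $n>q_1$ satisfy $L_s(n)\geq 1$, but never rules out $L_s(n)=2$ for such $n$. Since Theorem~\ref{thm:BS} only bounds $L_s(n)$ by $2$, you must additionally show that for $n>q_1$ the admissible length is realized by a \emph{unique} class, i.e.\ that the values $mq_k$ ($1\leq m\leq d_{k+1}+1$) and $q_{k,\ell}$ arising from different prefixes are pairwise distinct. The paper devotes the second half of its proof to this: fixing $k$ with $q_k\leq n<q_{k+1}$, the candidates are $mq_k$ for $1\leq m\leq d_{k+1}$, $(d_k+1)q_{k-1}=q_k+q_{k-1}-q_{k-2}$, and $q_{k+1,\ell}=\ell q_k+q_{k-1}$ for $1\leq\ell<d_{k+1}$, and the recursion $q_{k+1}=d_{k+1}q_k+q_{k-1}$ shows these are all distinct, so $|v|$ is determined by $n$. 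Without some argument of this kind the value $1$ in the statement is not proved. (Your handling of the seam at $n=q_1$ and of the $0^n$ class is fine.)
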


\begin{lemma}\label{main2}
Let $s$ be a Sturmian word and $w$ a factor of $s$ of length at least $2$. If all conjugates of $w$ are factors of $s$, then $w$ is a power of a conjugate of an element of $S$.
\end{lemma}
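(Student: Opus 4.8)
The plan is to reduce the general (possibly non-primitive) case to the primitive case already handled by Lemma~\ref{cyclic}, using the primitive root decomposition. So first I would write $w=z^m$, where $z$ is the primitive root of $w$ and $m\geq 1$, and set $p=|z|$, so that $|w|=mp\geq 2$.

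The key preliminary step is to pin down exactly how the conjugates of $w$ relate to those of $z$. Since $z$ is primitive, it has precisely $p$ distinct conjugates $z^{(0)},\ldots,z^{(p-1)}$, obtained by cyclically shifting $z$ by $0,1,\ldots,p-1$ positions; and cyclically shifting $w=z^m$ by the same amount $i$ (with $0\le i<p$) produces exactly $\bigl(z^{(i)}\bigr)^{m}$. Hence the distinct conjugates of $w$ are precisely the words $\bigl(z^{(i)}\bigr)^{m}$ for $0\le i<p$. In particular, the hypothesis that every conjugate of $w$ is a factor of $s$ forces every $\bigl(z^{(i)}\bigr)^{m}$ to be a factor of $s$; and since $z^{(i)}$ is itself a factor of $\bigl(z^{(i)}\bigr)^{m}$, it follows that every conjugate of $z$ occurs in $s$.

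Next I would split on $p$. If $p\geq 2$, then $z$ is a primitive factor of $s$ of length at least $2$ all of whose conjugates are factors of $s$, so Lemma~\ref{cyclic} applies and gives that $z$ is a conjugate of some $v\in S$. Then $w=z^m$ is an $m$-th power of a conjugate of $v$, which is exactly the claim. If instead $p=1$, then $z$ is a single letter and $w=z^m$ with $m\geq 2$; because $\alpha<1/2$ the word $11$ is not a factor of $s$, so the repeated letter cannot be $1$, whence $z=0=s_0\in S$ and $w=s_0^{\,m}$ is again a power of a (trivial) conjugate of an element of $S$.

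The only delicate point is the correspondence in the second paragraph: taking $m$-th powers must give a genuine bijection between the conjugacy class of $z$ and that of $w$, and this is precisely where primitivity of the root $z$ is needed (for a non-primitive root the count of distinct conjugates would not match). Once this identification is secured, the statement follows at once from Lemma~\ref{cyclic} together with the elementary single-letter observation, and I do not anticipate any genuine obstacle beyond this bookkeeping.
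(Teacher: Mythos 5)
Your proof is correct and follows essentially the same route as the paper's: write $w$ as a power of its primitive root, observe that all conjugates of the root must then be factors of $s$, and apply Lemma~\ref{cyclic}. You are in fact slightly more careful than the paper's one-line argument, since you explicitly treat the case where the primitive root is a single letter, to which Lemma~\ref{cyclic} (stated only for factors of length at least $2$) does not directly apply.
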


\begin{proof}
If $w=v^m$, $v$ primitive, and all conjugates of $w$ are factors of $s$, then in particular all conjugates of $v$ are factors of $s$, hence by Lemma\ref{cyclic}, $v$ is a conjugate of  an element of $S$.
\end{proof}

\begin{example}
The converse is not true. Consider the Fibonacci word $f=010010100100101001\cdots$. The factor $w=(010)^3$ is a power of the standard prefix $010$, yet no other conjugate of $w$ is a factor of $f$.
\end{example}

The following result is due to Damanik and Lenz~\cite[Thm.~4]{DBLP:journals/ejc/DamanikL03} (see also~\cite{peltomaki2016privileged}). Recall that the \emph{index} of a factor $v$ of $s$ is the largest integer $n$ such that $v^n$ is a factor of $s$.

\begin{theorem}[\cite{DBLP:journals/ejc/DamanikL03}]\label{thm:dl}
Let $s$ be a Sturmian word.
\begin{itemize}
\item All conjugates of the standard prefix $s_1$ have index $d_2+1$;
    \item For every $k\geq 2$, the set of indexes of all conjugates of the standard prefix $s_k$ is $\{d_{k+1}+1,d_{k+1}+2\}$;
    
    \item For every $k\geq 2$, the set of indexes of all conjugates of a semistandard prefix $s_{k,\ell}$ is $\{1,2\}$.
\end{itemize}
\end{theorem}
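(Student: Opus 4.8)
The plan is to recast indices as arc lengths for the irrational rotation that codes $s$. Since all Sturmian words of slope $\alpha$ share the same factors, hence the same indices, I may take $s=s_{\alpha,\alpha}$, viewed as the coding of the orbit of a point of $\mathbb T=\mathbb R/\mathbb Z$ under $R\colon x\mapsto x+\alpha$, where $I_0=[0,1-\alpha)$ is labelled $0$ and $I_1=[1-\alpha,1)$ is labelled $1$. A factor $w=w_0\cdots w_{\ell-1}$ then corresponds to the half-open arc $J_w=\bigcap_{j=0}^{\ell-1}R^{-j}(I_{w_j})$, and as $w$ ranges over $\Fact_s(\ell)$ these $\ell+1$ arcs partition $\mathbb T$, with endpoints $\{-j\alpha : 0\le j\le \ell\}$. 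The first key step is the equivalence: $w^n$ is a factor of $s$ iff $\bigcap_{m=0}^{n-1}\bigl(J_w-m\ell\alpha\bigr)\neq\emptyset$. As $R^{-\ell}$ is a translation of absolute value $\eta:=\lVert\ell\alpha\rVert$ (the distance from $\ell\alpha$ to the nearest integer), the successive translates of the single arc $J_w$ drift monotonically, so their common part is an arc of length $|J_w|-(n-1)\eta$ as long as this is positive. Hence the index of $w$ equals $\lceil |J_w|/\eta\rceil$, the ceiling accounting for the half-open boundary. The theorem thus reduces to computing the arc lengths of the conjugates of $s_k$ and of $s_{k,\ell}$, and dividing by $\lVert q_k\alpha\rVert$, respectively $\lVert q_{k,\ell}\alpha\rVert$.

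For $\ell=q_k$ set $\eta_k=\lVert q_k\alpha\rVert=|q_k\alpha-p_k|$. I would invoke the three-distance theorem for the $q_k+1$ partition points: for $k\ge 2$ it produces exactly three arc lengths $\eta_k<\eta_{k-1}<\eta_{k-1}+\eta_k$, the largest being the sum of the other two. Being primitive, $s_k$ has $q_k$ distinct conjugates, i.e.\ all but one of the $q_k+1$ factors of length $q_k$; the leftover is the \emph{singular} factor --- the unique length-$q_k$ factor not all of whose conjugates occur, cf.\ Lemma~\ref{cyclic} --- and I would show it is exactly the one carrying the shortest arc $\eta_k$, of index $\lceil\eta_k/\eta_k\rceil=1$. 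The conjugate arcs therefore realize the two remaining lengths $\eta_{k-1}$ and $\eta_{k-1}+\eta_k$. Substituting the continued-fraction identity $\eta_{k-1}=d_{k+1}\eta_k+\eta_{k+1}$ with $0<\eta_{k+1}<\eta_k$ yields indices $\lceil\eta_{k-1}/\eta_k\rceil=d_{k+1}+1$ and $\lceil(\eta_{k-1}+\eta_k)/\eta_k\rceil=d_{k+1}+2$, both attained, which is the middle bullet.

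The case $k=1$ is genuinely exceptional: with only $q_1+1$ points the three-distance theorem collapses to two gap lengths, $\eta_0$ for every conjugate arc and $\eta_1$ for the singular one, so all conjugates of $s_1$ share the single index $\lceil\eta_0/\eta_1\rceil=d_2+1$, giving the first bullet. For the semistandard prefix $s_{k,\ell}$ I would repeat the analysis with $\ell=q_{k,\ell}$ and $\eta=\lVert q_{k,\ell}\alpha\rVert=(d_k-\ell)\eta_{k-1}+\eta_k$, which satisfies $\eta>\eta_{k-1}$ because $\ell<d_k$. The three gap lengths are now $\eta_{k-1}$, $\eta$, and $\eta_{k-1}+\eta$; since $\eta<\eta_{k-1}+\eta<2\eta$, the unique longest arc has index $2$, while every shorter arc has length at most $\eta$ and index $1$. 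As the singular factor again sits among the shortest arcs and the longest arc is a conjugate, both values $1,2$ occur, which is the third bullet.

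The main obstacle I anticipate is the three-distance bookkeeping: beyond determining the set of gap lengths, one must certify which arcs --- conjugate or singular --- carry which length (in particular that the longest arc is always a conjugate and the shortest is the singular factor), and one must treat the boundary cases where $|J_w|/\lVert\ell\alpha\rVert$ is exactly an integer --- as for the middle gap $\eta$ in the semistandard case --- since these are precisely what separate index $d_{k+1}+1$ from $d_{k+1}+2$, or $1$ from $2$. A purely combinatorial alternative sidesteps the rotation: the lower bounds follow from $s_{k+1}=s_k^{d_{k+1}}s_{k-1}$ together with the classical fact that $s_{k-1}s_k$ and $s_ks_{k-1}$ agree on their first $q_k+q_{k-1}-2\ge q_k$ letters, so that $s_{k-1}s_k$ begins with $s_k$ and $s_k^{d_{k+1}+1}$ is a factor; the matching upper bounds, and the extra unit for a single conjugate, then hinge on controlling how occurrences of conjugates of $s_k$ are forced to lie within the canonical decomposition of $s$, which is the delicate part.
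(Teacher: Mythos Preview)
The paper does not contain a proof of this statement: Theorem~\ref{thm:dl} is quoted from Damanik and Lenz and used as a black box to obtain Corollary~\ref{cor:dl} and, through it, the formula in Theorem~\ref{thm:formula}. There is therefore no in-paper argument to compare your proposal against.

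For what it is worth, your sketch is essentially a condensed version of the original Damanik--Lenz proof. Their argument likewise realises factors of length $\ell$ as the arcs cut out on $\mathbb T$ by the points $\{-j\alpha:0\le j\le\ell\}$, expresses the index of $w$ as $\lceil |J_w|/\lVert\ell\alpha\rVert\rceil$, and then reads off the possible arc lengths from the three-distance theorem at $\ell=q_k$ and $\ell=q_{k,\ell}$. Your identifications $\eta_{k-1}=d_{k+1}\eta_k+\eta_{k+1}$ and $\lVert q_{k,\ell}\alpha\rVert=(d_k-\ell)\eta_{k-1}+\eta_k$ are the right ones, and the resulting ceilings give exactly the claimed index sets. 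The point you yourself flag --- pinning the singular factor to the shortest gap and certifying that the longest gap is always realised by a conjugate --- is genuine but routine once one also records the \emph{multiplicities} of the three gap lengths (the short gap occurs exactly once, which forces it to be the singular arc); this, together with a careful treatment of the half-open endpoints when $|J_w|/\eta$ hits an integer, is precisely how the cited paper closes the argument. Your combinatorial alternative via $s_{k+1}=s_k^{d_{k+1}}s_{k-1}$ and the near-commutation of $s_{k-1}s_k$ and $s_ks_{k-1}$ gives the lower bounds cleanly, but as you note the matching upper bounds by that route are the hard part and would essentially require redoing the return-word analysis that the rotation picture encodes for free.
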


\begin{corollary}\label{cor:dl}
Let $s$ be a Sturmian word.
\begin{itemize}
    \item For every $k\geq 1$ and $1\leq m\leq d_{k+1}+1$, all  conjugates of $s_k^m$ are factors of $s$, but not all conjugates of $s_k^{d_{k+1}+2}$  are factors of $s$;
    
    \item For every $k\geq 2$, all  conjugates of $s_{k,\ell}$ are factors of $s$, but not all conjugates of $s_{k,\ell}^2$   are  factors of $s$.
\end{itemize}
\end{corollary}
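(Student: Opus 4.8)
The plan is to translate the statement, which is phrased in terms of ``all conjugates of a power being factors,'' into the language of the \emph{index} of individual conjugates, so that Theorem~\ref{thm:dl} applies directly. The bridge is a single elementary combinatorial fact, after which the two bullets follow by simply reading off the relevant indexes.

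First I would establish that fact: for any word $v$ of length $p$ and any integer $m\geq 1$, the set of conjugates of $v^m$ is exactly $\{c^m : c \text{ is a conjugate of } v\}$. Indeed, cyclically shifting $v^m$ by $i<p$ positions produces $(\sigma^i v)^m$, where $\sigma^i v$ denotes the $i$-th cyclic shift of $v$; and since $v^m$ has period $p$, two shifts differing by a multiple of $p$ coincide. Thus the distinct conjugates of $v^m$ are precisely the $m$-th powers of the distinct conjugates of $v$. It is worth noting that this requires no primitivity assumption on $v$.

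From this I would deduce the key equivalence: all conjugates of $v^m$ are factors of $s$ if and only if $c^m$ is a factor of $s$ for every conjugate $c$ of $v$, which by the very definition of index is equivalent to every conjugate of $v$ having index at least $m$. Equivalently, the largest $m$ for which all conjugates of $v^m$ are factors of $s$ equals the \emph{minimum} of the indexes taken over all conjugates of $v$.

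It then remains to read off this minimum in each case from Theorem~\ref{thm:dl}. Taking $v=s_k$: for $k=1$ every conjugate has index $d_2+1$, and for $k\geq 2$ the indexes lie in $\{d_{k+1}+1,\,d_{k+1}+2\}$, so in both cases the minimum is $d_{k+1}+1$; hence all conjugates of $s_k^m$ are factors precisely for $1\leq m\leq d_{k+1}+1$, while for $m=d_{k+1}+2$ some conjugate fails, giving the first bullet. Taking $v=s_{k,\ell}$ with $k\geq 2$: the indexes lie in $\{1,2\}$, so the minimum is $1$, whence all conjugates of $s_{k,\ell}$ are factors but not all conjugates of $s_{k,\ell}^2$ are, giving the second bullet. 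I do not expect a genuine obstacle here; the only point requiring care is the passage from ``conjugates of a power'' to ``powers of conjugates,'' and in particular checking that this holds as an \emph{equality} of sets even when $v$ is not primitive, so that minimizing the index over conjugates of $v$ is legitimate.
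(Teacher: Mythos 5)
The paper gives no explicit proof of this corollary, treating it as an immediate consequence of Theorem~\ref{thm:dl}; your argument supplies exactly the intended reasoning, correctly isolating the one point that needs checking (that the conjugates of $v^m$ are precisely the $m$-th powers of the conjugates of $v$, with no primitivity needed) and then reading off the minimum index from each case of the theorem. This is correct and essentially the paper's own route.
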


We are now able to give the proof of Theorem~\ref{thm:formula}.

 \begin{proof}
The assertion is trivially verified for $n=0$, as well as for $1\leq n\leq q_1$ since the $n+1$ factors of $s$ of length $n$ are $0^n$ and the $n$ conjugates of $0^{n-1}1$.

Let then $n>q_1$, and suppose $L_s(n)>0$, so that there exists a factor $w$ of length $n$ such that all conjugates of $w$ are factors of $s$. By Lemma~\ref{main2}, there exists $v\in S$ and $m\geq 1$ such that all conjugates of $v^m$ are factors of $s$. Since $n>q_1$, either $v=s_{k-1}$ or $v=s_{k,\ell}$ for  some $k\geq 2$. By~Corollary~\ref{cor:dl}, $n=q_{k,\ell}$ for $k \geq 2$ or $n=mq_{k}$ for $1\leq m\leq d_{k+1}+1$ and $k \geq 1$.

To conclude the proof, we must show that $L_s(n)\leq 1$ for $n>q_1$, i.e., that the prefix $v$ is uniquely determined by $n$. For $k\geq 1$ and $1\leq\ell<d_{k+1}$, by definition one has the following:
\begin{equation}\label{eq:q}
    q_{k+1}=d_{k+1}q_{k}+q_{k-1},\quad
q_{k+1,\ell}=\ell q_{k}+q_{k-1}\,.
\end{equation}
In particular, the sequence $(q_k)$ is strictly increasing; let then $k\geq 2$ be such that $q_k\leq n<q_{k+1}$, where $L_s(n)>0$. By the above argument, the possible values for $n$ are
\begin{enumerate}
    \item $mq_k$, for $1\leq m\leq d_{k+1}$,
    \item $(d_k+1)q_{k-1}=q_k+q_{k-1}-q_{k-2}$,
    \item $q_{k+1,\ell}$, for $1\leq\ell<d_{k+1}$.
\end{enumerate}
In view of \eqref{eq:q}, these are all distinct, so that the corresponding value for $|v|$ (respectively $q_k$, $q_{k-1}$, and $q_{k+1,\ell}$) is well defined and uniquely determined.
 \end{proof}

\section*{Acknowledgments}

We thank all the participants of Jason Bell's seminar~\cite{Bell21}, in particular: Jason Bell, Jeffrey Shallit, Christophe Reutenauer and Narad Rampersad.

%

\end{document}